\newtheorem{theorem}{Theorem}
\newtheorem{lemma}{Lemma}
\DeclarePairedDelimiter\norm{\lVert}{\rVert}
\begin{document}

\title{Non-Orthogonal Multiple Access Based on Hybrid Beamforming for mmWave Systems}

%
%

\author{
    \IEEEauthorblockN{Mojtaba Ahmadi Almasi\IEEEauthorrefmark{1}, Hani Mehrpouyan\IEEEauthorrefmark{1}}\\
    \IEEEauthorblockA{\IEEEauthorrefmark{1}{\small{Department of Electrical and Computer Engineering, Boise State University,
   \{mojtabaahmadialm,hanimehrpouyan\}@boisestate.edu}}}\\
    

\vspace{-15pt}
}
\maketitle
	\thispagestyle{empty}

\begin{abstract}
This paper aims to study the utilization of non-orthogonal multiple access (NOMA) in hybrid beamforming (HB) multi-user systems called HB-NOMA to serve a large number of mobile users (MUs). First, a sum-rate expression for the HB-NOMA problem is formulated. Second, a suboptimal algorithm is proposed to maximize the sum-rate. Then, a lower bound for the achievable rate is derived under  the condition that the angle between the effective channel vectors of the MU with the highest channel gain and other MUs located inside a cluster is non-zero, which we denote by imperfect correlation. The lower bound indicates that an inefficient MU cluster, can cause severe inter-cluster interference in the network. To verify our findings, numerical simulations have been conducted. 
\end{abstract}


%
\IEEEpeerreviewmaketitle

\section{Introduction}
Fifth generation (5G) wireless networks are deemed to be a promising technology for the rapid growth of mobile data traffic demand. Millimeter-Wave (mmWave) communications operating in the $28$-$300$ GHz range has been proposed as one of the feasible solutions for 5G networks~\cite{r1}. On one hand, the existence of a large communication
bandwidth at mmWave frequencies represents the potential for
significant throughput gains. On the other hand, the shorter wavelengths at the mmWave band allow for the deployment of a large number of antenna elements in a small area, which enables
mmWave systems to potentially support a higher degree
of beamforming gain and multiplexing~\cite{r1}. However, significant path loss, blockage, and hardware limitations are major obstacles for the deployment of mmWave systems. To address these obstacles, several mmWave systems
have been proposed to date~\cite{r5,el2014spatially,r9,r19,almasi2018new}.

An analog beamforming mmWave system is designed in~\cite{r5} which uses one radio frequency (RF) chain and can support only one data stream. In order to transmit multiple streams, by exploiting several RF chains hybrid beamforming (HB) mmWave systems are designed~\cite{el2014spatially}. In~\cite{r9}, the concept of beamspace multi-input multi-output (MIMO) is introduced where several RF chains are connected to a lens antenna array via switches. Finally, multi-beam lens-based reconfigurable antenna MIMO systems are newly proposed to overcome severe pathloss and shadowing in mmWave frequencies~\cite{r19,almasi2018new}. Beside mmWave communications, non-orthogonal multiple access (NOMA) is another enabling technique for 5G networks. NOMA can be used to augment the spectral efficiency in multi-user scenarios~\cite{higuchi2015non}. 

In the aforementioned systems, each beam is used to serve only one mobile user (MU)~\cite{alkhateeb2015limited,almasi2018reconfigurable}. The work in~\cite{alkhateeb2015limited}, shows that exploiting hybrid beamforming in multi-user systems achieves higher spectral efficiency. Also,~\cite{almasi2018reconfigurable} enhances the spectral efficiency by supporting several MUs through multi-beam reconfigurable antenna. Nevertheless, the number of served MUs maybe not sufficient to support the larger user base in 5G networks. To overcome this issue, the integration of NOMA in mmWave systems (mmWave-NOMA) which allows several MUs to share the same beam has been studies in~\cite{chen2017exploiting,ding2017noma,wang2017spectrum,hao2017energy,yang2017noma,wu2017non}. In~\cite{chen2017exploiting,ding2017noma,wang2017spectrum,hao2017energy,yang2017noma} NOMA is evaluated for mmWave systems assuming only baseband precoders/combiners, whereas the work in~\cite{wu2017non} has recently studied NOMA in hybrid beamforming systems. In particular, due to use of hybrid beamforming, the digital precoder of the base station (BS) is not perfectly aligned with the MU's effective channel. While considering this imperfect alignment, a power allocation algorithm that maximizes the sum-rate has been proposed. However, the approach in~\cite{wu2017non} does not study the effect of imperfect beamforming on the sum-rate. Indeed, the approach in~\cite{wu2017non} can be considered a preliminary study applicable to small scale systems, since only two clusters with each of them containing only two MUs are considered. 

In this paper, inspired by the hybrid beamforming system in~\cite{alkhateeb2015limited} and the work in~\cite{wu2017non}, we study the impact of using NOMA on the performance of hybrid beamforming systems which we denote as HB-NOMA. To this end, we consider a scenario in which a single BS is equipped with a hybrid beaforming system similar to that of~\cite{alkhateeb2015limited}. The BS transmits several streams through the formed beams and each beam is allowed to be shared with multiple MUs inside a cluster. Due to application of analog components, unlike digital precoders, in hybrid beamforming, the precoders are not able to perfectly align a beam toward the effective channel of all MUs. This causes the angle between the effective channel vectors of the first MU (see Fig. 1) and other MUs located in the same cluster to be non-zero. Therefore, this paper evaluates the effect of the imperfect alignment on the achievable rate of MUs. The contributions of this paper can be summarized as
\begin{enumerate}
    \item We propose a generalized hybrid beamforming-based NOMA system that consists of one BS and $N$ clusters, where each cluster has $M$ MUs (see Fig. 1). It is assumed that the BS performs hybrid beamforming, i.e., analog/digital precoder, whereas the MUs only perform analog combining. The sum-rate is formulated for the $m$th MU in the $n$th cluster.
    \item A suboptimal algorithm is proposed in three steps to maximize the sum-rate. In  the first step, we design the analog precoder/combiner while in the second step, we derive the digital precoder architecture. Finally, we formulate a suboptimal power allocation technique for the proposed setup. 
    \item We derive a lower bound for the achievable rate of the $m$th MU in the $n$th cluster in the existence of imperfect correlation amongst the channels of MUs in the same cluster. Our analysis shows that under the assumption of imperfect correlation, there can be a significant data-rate loss. 
    \end{enumerate}

The paper is organized as follows: Section~\ref{sec:system} presents the
system model for the HB-NOMA system. In Section~\ref{sec:problem}, the sum-rate problem for the HB-NOMA system is formulated. In Section~\ref{algorithm}, the proposed algorithm for maximizing the sum-rate is presented. Section~\ref{sec:lower} derives a lower bound for the achievable rate of an HB-NOMA user. In Section~\ref{sec:simulation}, we present simulations investigating the performance of the rate. Section~\ref{sec:conclusion} concludes the paper.

\textbf{Notations:} Hereafter, $j = \sqrt{-1}$, small letters, bold letters and bold capital letters will designate scalars, vectors, and matrices, respectively. Superscripts $(\cdot)^{\dagger}$ and $(\cdot)^{*}$ denote the transpose  and transpose-conjugate operators, respectively. Further, $|\cdot|$, and $\norm[]{\cdot}^2$ denote the absolute value, and norm-$2$ of $(\cdot),$ respectively. Finally, the element in $i$th  row of $j$th column of matrix $\mathbf{X}$ is denoted by $\mathbf{X}(i,j)$.

\section{System Model}\label{sec:system}
We consider a mmWave downlink system for 5G wireless networks composed of a BS and multiple MUs as shown in Fig.~\ref{fig:system}. It is assumed that the BS is equipped with  $N_\text{RF}$ chain and $T_\text{BS}$ antennas, while each MU has one RF chain and $T_\text{MU}$ antennas. Further, we assume that the BS communicates with each MU via only one stream, which is in line with prior work such as~\cite{alkhateeb2015limited}. Note that in traditional hybrid beamforming based multi-user systems it is assumed that the maximum number of MUs that can be simultaneously served by the BS equals to the number of BS RF chains~\cite{alkhateeb2015limited}. 

In order to establish better connectivity in a dense area and further improve the sum-rate, this paper utilizes NOMA in hybrid beamforming multi-user systems which is denoted by HB-NOMA. To this end, each beam is allowed to serve more than one MU. For the sake of simplicity, the number of MUs served by each beam is identical and equal to $M$. Also, the MUs are grouped into $N\leq N_\text{RF}$ clusters. Hence, the proposed HB-NOMA system can simultaneously serve $MN \gg N_\text{RF}$ MUs. 

On the downlink, the hybrid beamforming is done through two stages. In the first stage, the transmitter applies an $N\times N$ baseband precoder $\mathbf{F}_\text{BB}$ using its $N_\text{RF}$ RF chains. Next, using analog phase shifters a $T_\text{BS} \times N$ RF precoder, $\mathbf{F}_\text{RF}$, is applied. Thus, the transmit signal vector, $\mathbf{x}$, is given by  
\begin{equation} \label{eq1}
    \mathbf{x} = \mathbf{F}_\text{RF}\mathbf{F}_\text{BB}\mathbf{s},
\end{equation}
where $\mathbf{s} = [s_1, s_2, \cdots, s_N]^{\dagger}$ denotes the information signal vector. Each $s_{n} = \sum_{m = 1}^{M}\sqrt{P_{n,m}^\prime}s_{n,m}$ is the superposition coded signal due to NOMA with $P_{n,m}^\prime$ and $s_{n,m}$ denoting the transmit power and transmitted information signal for the $m$th MU in the $n$th cluster, respectively. Hereafter, MU-$(n,m)$ denotes the $m$th user in the $n$th cluster. Since $\mathbf{F}_\text{RF}$ is implemented by using analog phase shifters, it is assumed that all elements of $\mathbf{F}_\text{RF}$ have equal norm, i.e., $\left|\mathbf{F}_\text{RF}(i,j)\right|^2  = T_\text{BS}^{-1}$ for $i=1,2,\cdots,M$ and $j=1,2,\cdots,N$. Further, the total power of the hybrid transmitter is constrained to $\norm[\big]{\mathbf{F}_\text{RF}\mathbf{F}_\text{BB}}^2_F = N$. The received signal at  MU-$(n,m)$, $\mathbf{r}_{n,m}$, is given by
\begin{equation}\label{eq2}
    \mathbf{r}_{n,m} = \mathbf{H}_{n,m}\mathbf{F}_\text{RF}\mathbf{F}_\text{BB}\mathbf{s} + \mathbf{n}_{n,m},
\end{equation}
where $\mathbf{H}_{n,m}$ of size $T_\text{MU}\times T_\text{BS}$ denotes the mmWave channel between the BS and MU-(n,m). $\mathbf{n}_{n,m}\sim \mathcal{N}(\mathbf{0},\sigma^2\mathbf{I})$ is the additive white Gaussian noise vector of size $T_\text{MU}\times 1$.      

\begin{figure}
\vspace*{-0.7cm}
\hspace*{-0.9cm}
    \centering
    \includegraphics[scale=1.175]{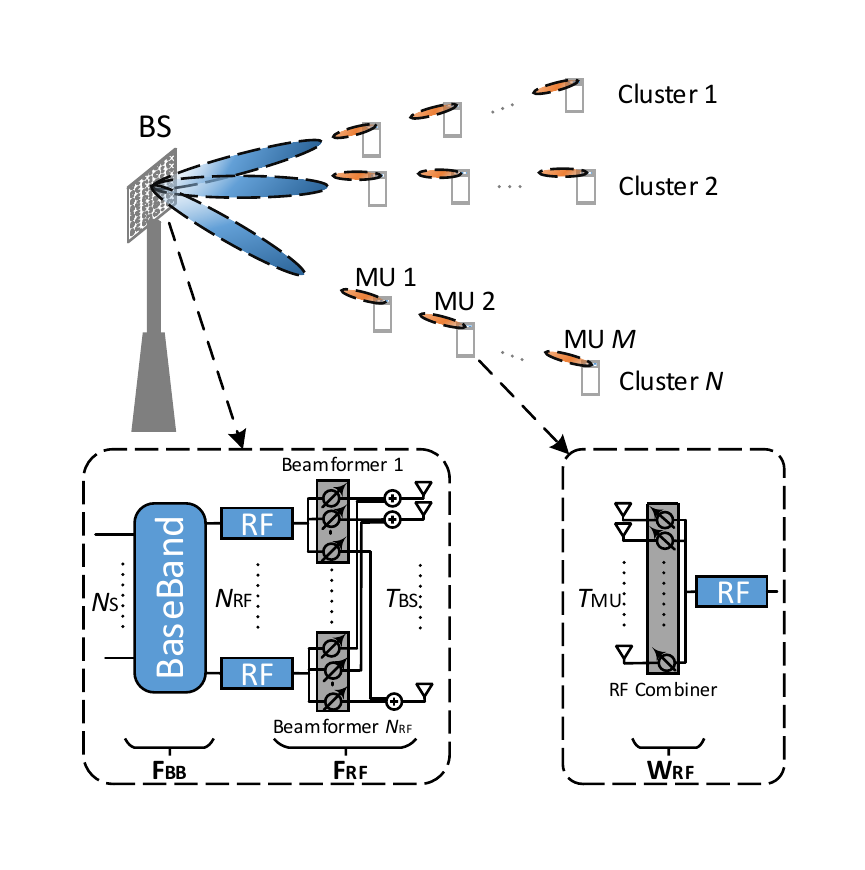}
    \vspace*{-1cm}
    \caption{Schematic of the HB-NOMA system with one BS and \textit{NM} MUs.}
    \label{fig:system}
\end{figure}

At MU-$(n,m)$, the RF combiner, $\mathbf{w}_{n,m}$,  is used to process the received vector as
\begin{align}\label{eq3}
    y_{n,m} &=  \underbrace{\mathbf{w}_{n,m}^*\mathbf{H}_{n,m}\mathbf{F}_\text{RF}\mathbf{f}^n_\text{BB}\sqrt{P_{n,m}^\prime}s_{n,m}}_{\text{desired signal}}   \nonumber \\
    & \ \ +\underbrace{\mathbf{w}_{n,m}^*\mathbf{H}_{n,m}\mathbf{F}_{RF}\mathbf{f}^n_{BB}\sum_{k \neq m}^M\sqrt{P_{n,k}^\prime}s_{n,k}}_{\text{intra-cluster interference}}  \nonumber \\
    & \ \ + \underbrace{\mathbf{w}_{n,m}^*\mathbf{H}_{n,m}\sum_{\ell \neq n }^N\mathbf{F}_{RF}\mathbf{f}^\ell_{BB}\sum_{q=1}^M\sqrt{P_{\ell,q}^\prime}s_{\ell,q}}_{\text{inter-cluster interference}} + \underbrace{\mathbf{w}^*_{n,m}\mathbf{n}_{n,m}}_{\text{noise}},
\end{align}
where $\mathbf{w}_{n,m}$ is of size $T_\text{MU}\times 1$. After combining, each MU decodes the intended signal by using successive interference cancellation (SIC)~\cite{higuchi2015non}. More details on SIC will be provided in Section III. For the sake of simplicity, $P_{n,m}$ denotes the normalized transmitted power such that $P_{n,m}=\frac{P_{n,m}^\prime}{\sigma^2}$.  

In mmWave communications, the single-path channel between the BS and  MU-$(n,m)$ can be expressed as
\begin{equation}\label{eq4}
    \mathbf{H}_{n,m} = \sqrt{T_\text{BS}T_\text{MU}} \beta_{n,m}\mathbf{a}_\text{MU}(\vartheta_{n,m})\mathbf{a}_\text{BS}^*(\varphi_{n,m}),
\end{equation}
where $\beta_{n,m} = g_{n,m}D_{n,m}^{\frac{-\nu}{2}}$, $g_{n,m}$ is the complex gain with zero-mean and unit-variance, $D_{n,m}$ is the distance between the BS and MU-$(n,m)$, and $\nu$ is the path loss factor~\cite{yang2017noma}. Also,  $\mathbf{a}_\text{BS}(\varphi_{n,m})$ and $\mathbf{a}_\text{MU}(\vartheta_{n,m})$ are the antenna array response vectors of the BS and MU-$(n,m)$, respectively, where $\vartheta_{n,m}$ and $\varphi_{n,m} \in [-1,1]$  are related to the angle of arrival (AoA) $\vartheta \in [-\frac{\pi}{2},\frac{\pi}{2}]$  and angle of departure (AoD)  $\phi \in [-\frac{\pi}{2},\frac{\pi}{2}]$ as $\vartheta_{n,m} = \frac{2d\text{sin}(\vartheta)}{\lambda}$ and $\varphi_{n,m} = \frac{2d\text{sin}(\phi)}{\lambda}$, respectively.  

In particular, for a uniform linear array (ULA), $\mathbf{a}_\text{BS}(\varphi_{n,m})$ is given by
\begin{equation}\label{eq5}
\mathbf{a}_{\text{BS}}(\varphi_{n,m}) = \frac{1}{\sqrt{T_\text{BS}}}\left[1, e^{-j\pi\varphi_{n,m}},\cdots, e^{-j\pi(T_\text{BS}-1)\varphi_{n,m}}\right]^\dagger,
\end{equation}
with $d$ is the antenna spacing and $\lambda$ denotes wavelength. The antenna array response vector for $\mathbf{a}_\text{MU}(\vartheta_{n,m})$ also has a similar structure to that of \eqref{eq5}~\cite{el2014spatially,alkhateeb2015limited}.  

To solely quantify the effect of digital/analog precoding on the sum-rate of HB-NOMA systems, throughout this paper several assumptions are considered: 
\begin{itemize}
    \item Full CSI of each user, $\mathbf{H}_{n,m}$, $m=1,2,\cdots,M$ and $n=1,2,\cdots,N$, is available at that MU. 
    \item The BS and all MUs steer the beams with continuous angles. That is, the quantization error is neglected for $\mathbf{F}_\text{RF}$ and $\mathbf{w}_{n,m}$, $m=1,2,\cdots,M$ and $n=1,2,\cdots,N$.
    \item The first MU of each cluster feeds complete effective channel back to the BS, i.e., infinite-resolution codebooks are used.
    \item The BS knows all MUs' channel gains $\left|\beta_{n,m}\right|$, for $m=1,2,\cdots,M$ and $n=1,2,\cdots,N$.  
    \item Each MU is capable of preforming error-free SIC. 
\end{itemize}

\section{Problem Formulation}\label{sec:problem}
As mentioned, the goal of this work is to quantify the impact of combining hybrid beamforming and NOMA on the sum rate of a dense network. In particular, we study the effect of jointly designing the hybrid precoders and analog combiners for clustered MUs on the sum-rate of the overall system.

In~(\ref{eq3}), after applying superposition coding at the transmitter, each user experiences two forms of interference. The intra-cluster inference which is due to other MUs within the cluster and inter-cluster interference which is due to MUs within other clusters. 

Suppressing the intra-cluster interference directly depends on efficient power allocation and deploying SIC. At the receiver side, each user performs SIC to decode the desired signal. To do this, the signal of users that have more power are decided and subtracted from the received signal. This process continues until the intended user decodes its signal. The remainder can be categorized as intra-cluster interference. Here, we assume that each user can perfectly perform SIC decoding. To mitigate the inter-cluster interference, the transmitter needs to design a proper beamforming matrix. In this paper, we adopt zero-forcing beamforming (ZFBF) which achieves a balance between implementation complexity and performance.      

Sum-rate has been categorically used to analyze the performance of NOMA. Consequently, we evaluate the sum-rate for the proposed HB-NOMA. The sum-rate is expressed as
\begin{equation}\label{sumrate}
    R_\text{sum} = \displaystyle \sum_{n=1}^N\sum_{m=1}^MR_{n,m},
\end{equation}
where $R_{n,m}$ is defined as
\begin{equation}\label{eq6}
    R_{n,m} = \text{log}_2\left(1 + \frac{P_{n,m}\left|\mathbf{w}_{n,m}^*\mathbf{H}_{n,m}\mathbf{F}_\text{RF}\mathbf{f}_\text{BB}^n\right|^2}{I_\text{intra}^{n,m} + I_\text{inter}^{n,m} + 1}\right).
\end{equation} 
Here, the intra-cluster interference after SIC processing at the $m$th user in the $n$th cluster, $I_\text{intra}^{n,m}$, is given by
\begin{equation}\label{eq61}
I_\text{intra}^{n,m} = 
\sum_{k=1}^{m-1}P_{n,k}\left|\mathbf{w}_{n,m}^*\mathbf{H}_{n,m}\mathbf{F}_\text{RF}\mathbf{f}_\text{BB}^n\right|^2. 
\end{equation}
Notice that the first user in each cluster is assumed to have the lowest allocated power. Further, the inter-cluster interference at the $m$th user in the $n$th cluster, $I_\text{inter}^{n,m}$, is given by 
\begin{equation}\label{eq62}
I_\text{inter}^{n,m} = \sum_{\ell\neq n}^{N}\sum_{q=1}^{M}P_{\ell,q}\left|\mathbf{w}_{n,m}^*\mathbf{H}_{n,m}\mathbf{F}_\text{RF}\mathbf{f}_\text{BB}^\ell\right|^2. 
\end{equation}

To improve the achievable rate, we need to obtain the optimal hybrid precoder $\breve{\mathbf{F}}_\text{RF}$, $\breve{\mathbf{f}}^{n}_\text{BB}$, for $n=1 \cdots N$, combiner $\breve{\mathbf{w}}_{n,m}$, for $n=1 \cdots N$,  $m=1 \cdots M$, and transmit power $\breve{P}_{n,m}$, for $n=1 \cdots N$,  $m=1 \cdots M$. To this end, the following optimization problem must be solved
\begin{IEEEeqnarray*}{rCl}
 &\underset{\{\mathbf{F}_\text{RF},\mathbf{f}_\text{BB}^n,\mathbf{w}_{n,m},P_{n,m}\}}{\text{maximize}} \ & \sum_{n=1}^N\sum_{m=1}^M R_{n,m}
 \IEEEyesnumber\\
&\text{subject to} & \left|\mathbf{F}_\text{RF}(i,j)\right|^2  = T_\text{BS}^{-1} \IEEEyessubnumber*\\
& &\norm[\big]{\mathbf{F}_\text{RF}[\mathbf{f}_\text{BB}^1, \mathbf{f}_\text{BB}^2,\cdots,\mathbf{f}_\text{BB}^N]}^2_F = N \\
& & \left|\mathbf{w}_{n,m}(i,j)\right|^2  = T_\text{MU}^{-1} \\
& & \sum_{n=1}^N\sum_{m=1}^M P_{n,m}\leq P_\text{t}, \\
& &  P_{n,m} > 0 
\end{IEEEeqnarray*}
where $i=1,2,\cdots,M$, $j=1,2,\cdots,M$, and $P_\text{t}$ equals to the signal-to-noise ratio (SNR), i.e, $P_\text{t} = \text{SNR}$. 
\section{The Maximization Algorithm}\label{algorithm}
Unfortunately, the maximization problem in (7) is non-convex since the objective function has a complicated expression and there is a coupling between the analog and digital precoders. Thus, finding a closed-form solution is non-trivial. Hence, we propose a simple but effective algorithm in three steps.

In the first step the BS and MU-$(n,m)$ solve the following problem
\begin{equation}\label{eq7}
    \underset{\{\mathbf{w}_{n,m},\mathbf{f}_\text{RF}^{n,m}\}}{\text{maximize}} \ \left|\mathbf{w}_{n,m}^*\mathbf{H}_{n,m}\mathbf{f}_\text{RF}^{n,m}\right| \qquad \text{subject to (10a) and (10c)}. 
\end{equation}
 Since the channel $\mathbf{H}_{n,m}$ has only one path, and given the continuous beamsteering capability assumption, the optimal solutions will be $\breve{\mathbf{w}}_{n,m} = \mathbf{a}_\text{MU}(\vartheta_{n,m})$ and $\breve{\mathbf{f}}_\text{RF}^{n,m} = \mathbf{a}_\text{BS}(\varphi_{n,m})$. To design the RF precoder, the BS selects the first user of each cluster. Here, the first MU is selected based on the following criterion,  
\begin{equation}\label{eq8}
    \left|\beta_{n,1}\right| > \left|\beta_{n,2}\right| > \cdots > \left|\beta_{n,M}\right| \quad \text{for} \quad n = 1, 2, \cdots, N.
\end{equation}
Stacking the the RF precoder vector of the first MUs obtained in~(\ref{eq7}), i.e., $\breve{\mathbf{f}}_\text{RF}^{n,1}$ for $n = 1, 2, \cdots, N$, we can find the RF or analog precoding matrix as
 \begin{equation}\label{eq81}
 \mathbf{F}_\text{RF} = \left[\breve{\mathbf{f}}^{1,1}_\text{RF}, \breve{\mathbf{f}}^{2,1}_\text{RF},\cdots,\breve{\mathbf{f}}^{N,1}_\text{RF}\right].     
 \end{equation}
The motivation for steering the beams to the first or closest user to the base station in each cluster is dictated by the structure of NOMA. More detail will be given in the next step. 
 
In the second step, the effective channel for MU-$(n,m)$ is expressed as
\begin{equation}\label{eq9}
    \overline{\mathbf{h}}_{n,m}^* =  \mathbf{w}_{n,m}^*\mathbf{H}_{n,m}\mathbf{F}_\text{RF} = \sqrt{T_\text{BS}T_\text{MU}}\beta_{n,m}\mathbf{a}_\text{BS}^*(\varphi_{n,m})\mathbf{F}_\text{RF}.
\end{equation}
We define the effective channel matrix as
\begin{equation}\label{eq91}
\overline{\mathbf{H}} = \left[\overline{\mathbf{h}}_{1,1}, \overline{\mathbf{h}}_{2,1}, \cdots, \overline{\mathbf{h}}_{N,1} \right]    
\end{equation}
where $\overline{\mathbf{h}}_{n,1}$ denotes the effective channel vector of MU-$(n,1)$. In HB-NOMA, the first MU of each cluster has to decode other MUs' signal in that cluster first before it can decode its own signal. Thus, by steering the beams toward the first MU within each cluster, we ensure that SIC can be performed in an effective fashion. Hence, the design of the RF precoder has to follow that of~(\ref{eq81}) to reduce the intra-cluster interference. Further, the effective channel matrix~(\ref{eq91}) causes the inter-cluster interference on the first MU in each cluster to be eliminated, which is outlined in more detail below. 
 
Recall that designing a proper digital or baseband precoder $\mathbf{F}_\text{BB}$ remarkably reduces the inter-cluster interference. Also, recall that we have selected zero-forcing method to design $\mathbf{F}_\text{BB}$. Thus, designing the baseband precoder equals to solving 
\begin{equation}\label{eq10}
    \underset{\{\mathbf{f}_\text{BB}^\ell\}_{\ell\neq n}}{\text{minimize}} \ \left|I_\text{inter}^{n,m}\right| \qquad \text{subject to (10b)}.
\end{equation}
Based on ZFBF, the solution for (\ref{eq10}) is obtained as~\cite{alkhateeb2015limited}
\begin{equation}\label{eq11}
    \mathbf{F}_\text{BB} = \overline{\mathbf{H}}^*\left(\overline{\mathbf{H}}\overline{\mathbf{H}}^*\right)^{-1}\bf{\Lambda},
\end{equation}
where the diagonal elements of $\mathbf{\Lambda}$ are given by~\cite{alkhateeb2015limited}
\begin{equation}\label{eq12}
    \mathbf{\Lambda}_{n,n} = \sqrt{\frac{T_\text{BS}T_\text{MU}}{\left(\mathbf{F}_\text{RF}^*\mathbf{F}_\text{RF}\right)_{n,n}^{-1}}}\left|\beta_{n,1}\right|, \quad \text{for} \quad n = 1, 2, \cdots, N.
\end{equation}
According to ~(\ref{eq11}) and the hybrid beamforming results in~\cite{alkhateeb2015limited}, we can conclude that the inter-cluster interference amongst the first MUs is zero, i.e., $\overline{\mathbf{h}}^*_{n,1}\mathbf{f}^\ell_\text{BB} = 0$ for $n = 1, 2, \cdots, N$ and $\ell \neq n$. That is to say, inter-cluster interference is perfectly eliminated for the first MUs. This completes our justification about the orienting the beams toward the first MUs and choosing their effective channel vector in designing $\mathbf{F}_\text{BB}$.   

At the third step, the BS first reorders the MUs, then allocates the power. The reordering process is done based on the effective channel vectors as 
\begin{equation}\label{eq121}
    \norm[\big]{\overline{\mathbf{h}}_{n,1}} > \norm[\big]{\overline{\mathbf{h}}_{n,2}}>, \cdots, 
    > \norm[\big]{\overline{\mathbf{h}}_{n,M}} \quad \text{for} \quad n = 1, 2, \cdots, N.
\end{equation}
It is worth mentioning that in~(\ref{eq8}) we aimed to find the first MUs based on the large-scale channel gain. However, in HB-NOMA the power allocation is conducted based on order of the effective channel gains, which takes both the large scale gain and the impact of hybrid beamforming and combining into account. Here, in~\eqref{eq121}, we propose to improve HB-NOMA's sum-rate by carrying out the power allocation based on the effective channel. Accordingly, it is true that while using~(\ref{eq8}) or~(\ref{eq121}) we arrive at the same first MU in each cluster, the order of the remaining MUs might be different when using~(\ref{eq121}) compared to~(\ref{eq8}). This can be intuitively explained by considering that the alignment of the MUs in each cluster with regard to the first MU may play a more important role on the quality of the their channels than their proximity to the base station. 

The optimal power allocation in~(10) is non-trivial and iterative procedures are needed to solve for the power of the $m$th user in the $n$th cluster according to
\begin{equation}\label{eq13}
    \underset{\{P_{n,m}\}}{\text{maximize}} \ \sum_{n=1}^{N}\sum_{m=1}^M R_{n,m}   \qquad \text{subject to (10d) and (10e)}. 
\end{equation}
Obtaining the optimal solution for (\ref{eq13}) is beyond the scope of this paper. Hence, we propose a suboptimal solution. Our solution has two stages. First the BS divides power equally amongst the clusters, i.e., $P_\text{c} = P_\text{t}/N$. Then a fixed power allocation~\cite{higuchi2015non} is utilized for the users in each cluster with regard to the constraint $\sum_{m=1}^M P_{n,m} = P_\text{c}$. 
\section{The Rate Evaluation for HB-NOMA}\label{sec:lower}
 In this section we concentrate on studying the achievable rate of an HB-NOMA MU.

 In practical scenarios, correlation amongst the users in each cluster is always imperfect which gives $\mathbf{a}_\text{BS}(\varphi_{n,1}) \neq \mathbf{a}_\text{BS}(\varphi_{n,2}) \neq \cdots \neq \mathbf{a}_\text{BS}(\varphi_{n,M})$ for $n = 1, 2, \cdots, N$. The reason is that since $\varphi_{n,k}$ and $\varphi_{n,m}$ for $k=m$ are independent, the probability for the event $\varphi_{n,k} = \varphi_{n,m}$ is zero~\cite{alkhateeb2015limited}. Of interest, we here study the impact of imperfect correlation on the sum-rate in this subsection. Before this, we calculate the norm of the effective channel defined in Eq.~(\ref{eq9}). Defining
 \begin{equation}\label{eqFejer}
 \left|\mathbf{a}_\text{BS}^*(\varphi_{n,m})\mathbf{a}_\text{BS}(\varphi_{\ell,1})\right|^2 = K_{T_\text{BS}}(\varphi_{\ell,1}-\varphi_{n,m}) 
 \end{equation}
 where $K_{T_\text{BS}}$ is Fej$\acute{\text{e}}$r kernel of order $T_\text{BS}$~\cite{strichartz2000way}, we get
 \begin{equation}\label{eq1601}
 \norm[\big]{\overline{\mathbf{h}}^\text{Im}_{n,m}}^2 = T_\text{BS}T_\text{MU}\left|\beta_{n,m}\right|^2\displaystyle \sum_{\ell = 1}^N K_{T_\text{BS}}\left(\varphi_{\ell,1}-\varphi_{n,m}\right).
 \end{equation} 

The following lemma describes the relationship between the effective channel vectors of MU-$(n,1)$ and MU-$(n,m)$ for $m=2,3,\cdots,M$ and $n=1,2,\cdots,N$.
 \begin{lemma}\label{lemma:2}
 The relationship between the imperfect effective channel for MU-$(n,m)$ and MU-$(n,1)$ for $m = 2, 3,\cdots, M$ and $n = 1,2,\cdots,N$ can be modelded as
 \begin{equation} \label{eq19}
    \widetilde{\mathbf{h}}_{n,m}^\text{Im} \approx \rho_{n,m}\widetilde{\mathbf{h}}_{n,1}^\text{Im} + \sqrt{1 - \rho_{n,m}^2}\boldsymbol{\varpi}_{n,m}
\end{equation}
where $\widetilde{\mathbf{h}}_{n,m}^\text{Im}$ denotes the normalized effective channel, $\rho_{n,m} = \left|\widetilde{\mathbf{h}}_{n,m}^\text{Im*}\widetilde{\mathbf{h}}_{n,1}^\text{Im}\right|$
, and $\boldsymbol{\varpi}_{n,m}$ is a unit-norm vector.
 \end{lemma}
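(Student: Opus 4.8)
My plan is to read~(\ref{eq19}) as nothing more than the Gram--Schmidt decomposition of one unit vector along another, the only subtlety being a scalar phase. First I would normalize: set $\widetilde{\mathbf{h}}_{n,m}^\text{Im} = \overline{\mathbf{h}}_{n,m}^\text{Im}/\norm[\big]{\overline{\mathbf{h}}_{n,m}^\text{Im}}$ and likewise for $\widetilde{\mathbf{h}}_{n,1}^\text{Im}$, which are well-defined unit-norm vectors in $\mathbb{C}^N$ because the norm in~(\ref{eq1601}) is positive (it contains the term $K_{T_\text{BS}}(0)=1$ when $m=1$, and is positive almost surely in general). Then I would project $\widetilde{\mathbf{h}}_{n,m}^\text{Im}$ onto the line spanned by $\widetilde{\mathbf{h}}_{n,1}^\text{Im}$ and its orthogonal complement,
\begin{equation}
\widetilde{\mathbf{h}}_{n,m}^\text{Im} = \big(\widetilde{\mathbf{h}}_{n,1}^\text{Im*}\widetilde{\mathbf{h}}_{n,m}^\text{Im}\big)\widetilde{\mathbf{h}}_{n,1}^\text{Im} + \mathbf{e}_{n,m}, \qquad \widetilde{\mathbf{h}}_{n,1}^\text{Im*}\mathbf{e}_{n,m} = 0 .
\end{equation}

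Next, since both vectors are unit-norm, the Pythagorean identity gives $\norm[\big]{\mathbf{e}_{n,m}}^2 = 1 - \big|\widetilde{\mathbf{h}}_{n,1}^\text{Im*}\widetilde{\mathbf{h}}_{n,m}^\text{Im}\big|^2 = 1 - \rho_{n,m}^2$, using the definition $\rho_{n,m} = \big|\widetilde{\mathbf{h}}_{n,m}^\text{Im*}\widetilde{\mathbf{h}}_{n,1}^\text{Im}\big|$. Under imperfect correlation $\varphi_{n,m}\neq\varphi_{n,1}$, so generically $\rho_{n,m}<1$ and $\mathbf{e}_{n,m}\neq\mathbf{0}$; I then put $\boldsymbol{\varpi}_{n,m} = \mathbf{e}_{n,m}/\norm[\big]{\mathbf{e}_{n,m}}$, a unit-norm vector, so that $\mathbf{e}_{n,m} = \sqrt{1-\rho_{n,m}^2}\,\boldsymbol{\varpi}_{n,m}$ (and if $\rho_{n,m}=1$ the relation holds trivially for any unit $\boldsymbol{\varpi}_{n,m}$). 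Writing the projection coefficient in polar form $\widetilde{\mathbf{h}}_{n,1}^\text{Im*}\widetilde{\mathbf{h}}_{n,m}^\text{Im} = \rho_{n,m}e^{j\theta_{n,m}}$ and absorbing $e^{j\theta_{n,m}}$ into the reference direction $\widetilde{\mathbf{h}}_{n,1}^\text{Im}$ — which is legitimate because a common phase on the effective channel is compensated by the MU's analog combiner and alters none of the received powers in~(\ref{eq61})--(\ref{eq62}) nor the norm in~(\ref{eq1601}) — replaces the coefficient by the real nonnegative number $\rho_{n,m}$ and leaves $\boldsymbol{\varpi}_{n,m}$ unit-norm. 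Collecting terms yields~(\ref{eq19}); the ``$\approx$'' records only the suppression of this immaterial phase rotation.

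The computation is routine; the single place that needs a line of justification — hence the main, and quite mild, obstacle — is the last step, i.e. verifying that rotating $\widetilde{\mathbf{h}}_{n,1}^\text{Im}$ by $e^{j\theta_{n,m}}$ is harmless for every quantity that~(\ref{eq19}) will subsequently feed into (the correlation $\rho_{n,m}$, the intra- and inter-cluster interference terms, and the effective-channel norm), so that replacing the exact decomposition by~(\ref{eq19}) costs nothing for the rate analysis that follows. Everything else is the standard orthogonal-decomposition/Pythagoras argument applied to the two normalized effective channels defined through~(\ref{eq9}).
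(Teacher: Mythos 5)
Your proposal is correct and follows essentially the same route as the paper: an orthogonal decomposition of $\widetilde{\mathbf{h}}_{n,m}^\text{Im}$ along $\widetilde{\mathbf{h}}_{n,1}^\text{Im}$ with a unit-norm residual $\boldsymbol{\varpi}_{n,m}$ in its orthogonal complement, where the ``$\approx$'' accounts for discarding the phase of the inner product (the paper phrases this as dropping the pseudo-angle $\omega$ in $\rho e^{j\omega}=\widetilde{\mathbf{h}}_{n,m}^{\text{Im}*}\widetilde{\mathbf{h}}_{n,1}^\text{Im}$ and keeping only the Hermitian-angle cosine $\rho$). Your write-up is in fact slightly more explicit than the paper's, since you spell out the Pythagorean identity giving $\norm[\big]{\mathbf{e}_{n,m}}^2=1-\rho_{n,m}^2$ and justify why suppressing the phase is harmless for the downstream rate expressions.
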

 \begin{proof}
Assume that the effective channel vectors are fed back by using infinite-resolution codebook. Also, let $\widetilde{\mathbf{h}}_{n,m}^\text{Im}$ denotes the normalized effective channel vector for MU-$(n,m)$ given by
\begin{equation}\label{eq40}
\widetilde{\mathbf{h}}_{n,m}^\text{Im} = \frac{\overline{\mathbf{h}}_{n,m}^\text{Im}}{\norm[\big]{\overline{\mathbf{h}}^\text{Im}_{n,m}}}. \end{equation} 
For two complex-valued vectors $\overline{\mathbf{h}}_{n,m}^\text{Im},\overline{\mathbf{h}}_{n,1}^\text{Im} \in V_\mathbb{C}$ the angle between them is obtained as~\cite{scharnhorst2001angles}
\begin{equation}\label{anglevector}
\rho e^{j\omega} = \widetilde{\mathbf{h}}_{n,m}^\text{Im*}\widetilde{\mathbf{h}}_{n,1}^\text{Im},     
\end{equation}
where $(\rho\leq 1)$ is equal to~\cite{scharnhorst2001angles}
\begin{equation}\label{rho}
    \rho = \text{cos}{\Phi}_\text{H}(\widetilde{\mathbf{h}}_{n,m}^\text{Im},\widetilde{\mathbf{h}}_{n,1}^\text{Im}) = \left|\widetilde{\mathbf{h}}_{n,m}^\text{Im*}\widetilde{\mathbf{h}}_{n,1}^\text{Im}\right|,
\end{equation}
in which $\Phi_\text{H}(\widetilde{\mathbf{h}}_{n,m}^\text{Im},\widetilde{\mathbf{h}}_{n,1}^\text{Im})$, $0\leq \nu_\text{H}\leq\frac{\pi}{2}$, is called the Hermitian angle between two complex-valued vectors $\overline{\mathbf{h}}_{n,m}^\text{Im},\overline{\mathbf{h}}_{n,1}^\text{Im}$ and $\omega$, $-\pi\leq\omega\leq\pi$, is called their pseudo-angle. The factor $\rho$ is related to the angle between two lines in the complex vector space $V_\mathbb{C}$ while the angle $\omega$ is defined in the context of pseudoconformal transformations~\cite{scharnhorst2001angles,goldman1999complex}. Here, we rename the factor $\rho$ as the correlation between two vectors $\overline{\mathbf{h}}_{n,m}^\text{Im},\overline{\mathbf{h}}_{n,1}^\text{Im}$ and disregard the pseudo angle $\omega$. Therefore, the angle between the two vectors is approximated by $\rho$. Then, for some $n$ the two normalized vectors, $\widetilde{\mathbf{h}}_{n,m}^\text{Im}$ for $m = 2, 3, \cdots, M$ and $\widetilde{\mathbf{h}}_{n,1}^\text{Im}$, are related together through the factor $\rho_{n,m}$ and vector $\boldsymbol{\varpi}_{n,m}$ which is  a unit-norm in the null-space of $\widetilde{\mathbf{h}}_{n,1}^\text{Im}$.  
 \end{proof}

 Now, we derive a lower bound for the sum-rate of MU-$(n,m)$ in HB-NOMA systems when the correlation is imperfect. 
\begin{theorem}\label{theo:2}
Regarding the imperfect correlation, the lower bound of the rate of HB-NOMA MU-$(n,m)$ for $m>1$ cluster is given by
\begin{equation}\label{eq16}
    R_{n,m}^\text{Im} \geq \text{log}_2\left(1 +\frac{P_{n,m}\rho_{n,m}^2T_\text{BS}T_\text{MU}\left|\beta_{n,m}\right|^2}{\zeta_\text{intra}^{n,m}+ \zeta_\text{inter}^{n,m} + \zeta_\text{noise}^{n,m}}\right)
\end{equation}
where 
\begin{equation}\label{eq161}
\zeta_\text{intra}^{n,m} = \sum_{k = 1}^{m-1}P_{n,k}\rho_{n,m}^2T_\text{BS}T_\text{MU}\left|\beta_{n,m}\right|^2,
\end{equation}
and
\begin{align}\label{eq162}
\zeta_\text{inter}^{n,m} = &  P_\text{c}\left(1-\rho_{n,m}^2\right)T_\text{BS}T_\text{MU}\left|\beta_{n,m}\right|^2\lambda_\text{max}\left(\mathbf{S}\right)\eta_nK_{T_\text{BS},\Sigma_1},
\end{align}
$\lambda_\text{max}\left(\mathbf{S}\right)$ showing the maximum eigenvalue of $\mathbf{S} = \mathbf{F}_\text{BB}^{-n}\mathbf{F}_\text{BB}^{-n*}$.  $\mathbf{F}_\text{BB}^{-n}$ denotes the $\mathbf{F}_\text{BB}$ after eliminating the $n$th column. Also, for some $m$ we define
\begin{equation}\label{eq163}
K^{-1}_{T_{BS},\Sigma_m} = \left(\displaystyle \sum_{\ell = 1}^N K_{T_\text{BS}}\left(\varphi_{\ell,1}-\varphi_{n,m}\right)\right)^{-1}    
\end{equation}
 with $K_{T_\text{BS}}\left(\varphi_{\ell,1}-\varphi_{n,m}\right)$ denotes the Fej$\acute{\text{e}}$r kernel in~(\ref{eqFejer}). Finally $\zeta_\text{noise}^{n,m}$ is expressed as
\begin{equation}\label{eq164}
    \zeta_\text{noise}^{n,m} = \eta_nK_{T_{BS},\Sigma_1}K^{-1}_{T_{BS},\Sigma_m},
\end{equation}
where $K^{-1}_{T_{BS},\Sigma_m}$ is defined in~(\ref{eq163}) when $\varphi_{n,1}$ is replaced by $\varphi_{n,m}$.
 \end{theorem}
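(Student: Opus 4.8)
\textbf{Proof plan for Theorem~\ref{theo:2}.}
The plan is to transform the exact rate~\eqref{eq6} of MU-$(n,m)$ with $m>1$ into the stated bound by substituting the decomposition of Lemma~\ref{lemma:2} into the imperfect effective channel~\eqref{eq9}, exploiting the zero-forcing structure of $\mathbf{F}_\text{BB}$ to confine all inter-cluster leakage to the misalignment direction, and closing with the Kantorovich-type bound on $(\mathbf{F}_\text{RF}^*\mathbf{F}_\text{RF})_{n,n}^{-1}$. First I would write $\overline{\mathbf{h}}_{n,m}^\text{Im}=\norm{\overline{\mathbf{h}}_{n,m}^\text{Im}}\,\widetilde{\mathbf{h}}_{n,m}^\text{Im}$ and insert $\widetilde{\mathbf{h}}_{n,m}^\text{Im}\approx\rho_{n,m}\widetilde{\mathbf{h}}_{n,1}^\text{Im}+\sqrt{1-\rho_{n,m}^2}\,\boldsymbol{\varpi}_{n,m}$ into the three gains appearing in~\eqref{eq6}--\eqref{eq62}: the desired gain $|\overline{\mathbf{h}}_{n,m}^{\text{Im}*}\mathbf{f}_\text{BB}^n|^2$, which is also the per-user gain inside $I_\text{intra}^{n,m}$, and the inter-cluster gains $|\overline{\mathbf{h}}_{n,m}^{\text{Im}*}\mathbf{f}_\text{BB}^\ell|^2$ for $\ell\neq n$. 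Since~\eqref{eq11}--\eqref{eq12} give $\overline{\mathbf{h}}_{n,1}^{\text{Im}*}\mathbf{f}_\text{BB}^\ell=\boldsymbol{\Lambda}_{n,n}$ for $\ell=n$ and $0$ otherwise — identities that hold verbatim here, because the algorithm of Section~\ref{algorithm} never assumed perfect correlation — the $\rho_{n,m}$-aligned part of $\widetilde{\mathbf{h}}_{n,m}^\text{Im}$ contributes nothing to inter-cluster interference, so only $\boldsymbol{\varpi}_{n,m}$ survives there, while the desired gain reduces to $\norm{\overline{\mathbf{h}}_{n,m}^\text{Im}}^2\,|\rho_{n,m}\boldsymbol{\Lambda}_{n,n}\norm{\overline{\mathbf{h}}_{n,1}^\text{Im}}^{-1}+\sqrt{1-\rho_{n,m}^2}\,\boldsymbol{\varpi}_{n,m}^*\mathbf{f}_\text{BB}^n|^2$.

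Next I would build the two one-sided bounds. For the numerator I keep only the aligned component, i.e. $|\overline{\mathbf{h}}_{n,m}^{\text{Im}*}\mathbf{f}_\text{BB}^n|^2\ge\rho_{n,m}^2\norm{\overline{\mathbf{h}}_{n,m}^\text{Im}}^2\boldsymbol{\Lambda}_{n,n}^2\norm{\overline{\mathbf{h}}_{n,1}^\text{Im}}^{-2}$; because the right-hand side of~\eqref{eq6} is increasing in the desired gain and that very same gain multiplies every term of $I_\text{intra}^{n,m}$ in~\eqref{eq61}, inserting this lower bound simultaneously in the numerator and in $I_\text{intra}^{n,m}$ is a monotone, hence legitimate, relaxation of the SINR. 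For the inter-cluster term I use the equal-cluster-power allocation $\sum_q P_{\ell,q}=P_\text{c}$ to write $I_\text{inter}^{n,m}=P_\text{c}(1-\rho_{n,m}^2)\norm{\overline{\mathbf{h}}_{n,m}^\text{Im}}^2\,\boldsymbol{\varpi}_{n,m}^*\mathbf{F}_\text{BB}^{-n}\mathbf{F}_\text{BB}^{-n*}\boldsymbol{\varpi}_{n,m}$ and bound the quadratic form by $\lambda_\text{max}(\mathbf{S})$ using $\norm{\boldsymbol{\varpi}_{n,m}}=1$, with $\mathbf{S}=\mathbf{F}_\text{BB}^{-n}\mathbf{F}_\text{BB}^{-n*}$.

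It then remains to substitute closed forms and simplify. Using~\eqref{eq1601} for $\norm{\overline{\mathbf{h}}_{n,m}^\text{Im}}^2=T_\text{BS}T_\text{MU}|\beta_{n,m}|^2K_{T_\text{BS},\Sigma_m}$ and the analogous identity $\norm{\overline{\mathbf{h}}_{n,1}^\text{Im}}^2=T_\text{BS}T_\text{MU}|\beta_{n,1}|^2K_{T_\text{BS},\Sigma_1}$, together with $\boldsymbol{\Lambda}_{n,n}^2=T_\text{BS}T_\text{MU}|\beta_{n,1}|^2/(\mathbf{F}_\text{RF}^*\mathbf{F}_\text{RF})_{n,n}^{-1}$ from~\eqref{eq12}, the numerator lower bound and each intra-cluster term acquire a common scalar $K_{T_\text{BS},\Sigma_m}/[(\mathbf{F}_\text{RF}^*\mathbf{F}_\text{RF})_{n,n}^{-1}K_{T_\text{BS},\Sigma_1}]$, which therefore cancels from the ratio, leaving $P_{n,m}\rho_{n,m}^2T_\text{BS}T_\text{MU}|\beta_{n,m}|^2$ on top and $\zeta_\text{intra}^{n,m}$ as in~\eqref{eq161}; the inter-cluster bound then reads $P_\text{c}(1-\rho_{n,m}^2)T_\text{BS}T_\text{MU}|\beta_{n,m}|^2\lambda_\text{max}(\mathbf{S})(\mathbf{F}_\text{RF}^*\mathbf{F}_\text{RF})_{n,n}^{-1}K_{T_\text{BS},\Sigma_1}$ and the unit noise term becomes $(\mathbf{F}_\text{RF}^*\mathbf{F}_\text{RF})_{n,n}^{-1}K_{T_\text{BS},\Sigma_1}K^{-1}_{T_\text{BS},\Sigma_m}$. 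Finally, because the ULA steering vectors are unit-norm, $[\mathbf{F}_\text{RF}^*\mathbf{F}_\text{RF}]_{n,n}=1$, and the Kantorovich inequality~\cite{kantorovich1948functional} (as used in~\cite{alkhateeb2015limited}) gives $(\mathbf{F}_\text{RF}^*\mathbf{F}_\text{RF})_{n,n}^{-1}\le\eta_n$; replacing that quantity by $\eta_n$ in the only two denominator terms where it occurs enlarges the denominator and therefore preserves the inequality, producing exactly $\zeta_\text{inter}^{n,m}$ and $\zeta_\text{noise}^{n,m}$, i.e.~\eqref{eq16}.

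The step I expect to be the main obstacle is the numerator relaxation: the cross term $\sqrt{1-\rho_{n,m}^2}\,\boldsymbol{\varpi}_{n,m}^*\mathbf{f}_\text{BB}^n$ can in principle partially cancel the aligned component, so $|\overline{\mathbf{h}}_{n,m}^{\text{Im}*}\mathbf{f}_\text{BB}^n|^2\ge\rho_{n,m}^2\norm{\overline{\mathbf{h}}_{n,m}^\text{Im}}^2\boldsymbol{\Lambda}_{n,n}^2\norm{\overline{\mathbf{h}}_{n,1}^\text{Im}}^{-2}$ is not an unconditional identity; I would justify it at the same level of approximation already used to discard the pseudo-angle $\omega$ in Lemma~\ref{lemma:2}, i.e. by treating the misaligned portion of the user's own stream as not contributing useful power, and state this assumption explicitly. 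A minor but worthwhile point to spell out alongside is that the tightness of $\zeta_\text{inter}^{n,m}$ rests on $\boldsymbol{\varpi}_{n,m}$ being unit-norm and on $\mathbf{F}_\text{BB}^{-n}$ collecting precisely the $N-1$ off-cluster precoders, so that $\lambda_\text{max}(\mathbf{S})$ is the worst-case leakage amplification seen by any misalignment direction.
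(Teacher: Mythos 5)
Your proposal is correct and follows essentially the same route as the paper's proof: the Lemma~\ref{lemma:2} decomposition routes the desired and intra-cluster gains through the $\rho_{n,m}$-aligned component and confines the inter-cluster leakage to $\boldsymbol{\varpi}_{n,m}$ via the zero-forcing identities, the quadratic form $\boldsymbol{\varpi}_{n,m}^{*}\mathbf{F}_\text{BB}^{-n}\mathbf{F}_\text{BB}^{-n*}\boldsymbol{\varpi}_{n,m}$ is bounded by $\lambda_\text{max}(\mathbf{S})$ using $\norm{\boldsymbol{\varpi}_{n,m}}=1$, the Fej\'er-kernel expressions for the effective-channel norms are substituted, and the Kantorovich bound on $(\mathbf{F}_\text{RF}^{*}\mathbf{F}_\text{RF})_{n,n}^{-1}$ supplies the factor $\eta_n$. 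The one place you are more careful than the paper is the cross term $\boldsymbol{\varpi}_{n,m}^{*}\mathbf{f}_\text{BB}^{n}$ in the desired gain, which the paper simply sets to zero; your explicit acknowledgment that this is an approximation (at the same level as discarding the pseudo-angle) is a harmless and welcome clarification.
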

The proof is eliminated due to lack of enough space. Since for MU-$(n,1)$ the correlation factor $\rho = 1$ even in the case of imperfect correlation, Thoerem 1 is still valid for these users.    
Theorem~\ref{theo:2} clearly states that the achievable rate of each MU extremely depends on the correlation between the first and the intended MU such as a weak correlation reduces the power of the effective channel of that MU. Besides, the bound gives an useful insight on the clustering the MUs. That is, the relation between the AoD $\phi_{n,m}$ and the correlation factor $\rho_{n,m}$ is nonlinear. On the other hand, $\phi_{n,m}$ does not explicitly appears in the rate expression whereas $\rho_{n,m}$ does. As a result, the clustering can be done through $\rho_{n,m}$s as a design criterion.

The following two remarks are in order:
\begin{itemize}
    \item This paper concentrates only on studying the rate performance of HB-NOMA systems under imperfect effective channel correlation, several ideal assumptions are considered in Section~\ref{sec:system}. Studying the performance of the imperfect effective channel correlation along with practical assumptions can be done in the future works. For instance, in practical scenarios the effective channel at the transmitter is quantized. The impact of the quantization error on the sum-rate of proposed HB-NOMA system with imperfect correlation is required.
    \item In the algorithm proposed in Section~\ref{algorithm} the fixed intra-cluster and inter-cluster power allocation strategies are ordered. Adaptive but complex power allocation procedure for MUs inside each cluster can be designed through iterative algorithm~\cite{zhang2016robust}. On the other hand, in practice, each cluster poses different inter-cluster interference. Therefore, the algorithm will be suboptimum in term of inter-cluster power allocation. A useful inter-cluster power allocation regarding the interference caused by each cluster is proposed in~\cite{chen2017exploiting}.  
\end{itemize}
 \section{Numerical Results}\label{sec:simulation}
 In this section we discuss the numerical simulations for the achievable rate of the HB-NOMA users  under the imperfect correlation assumption.      

\begin{figure}
     \centering
     \includegraphics[scale=0.4]{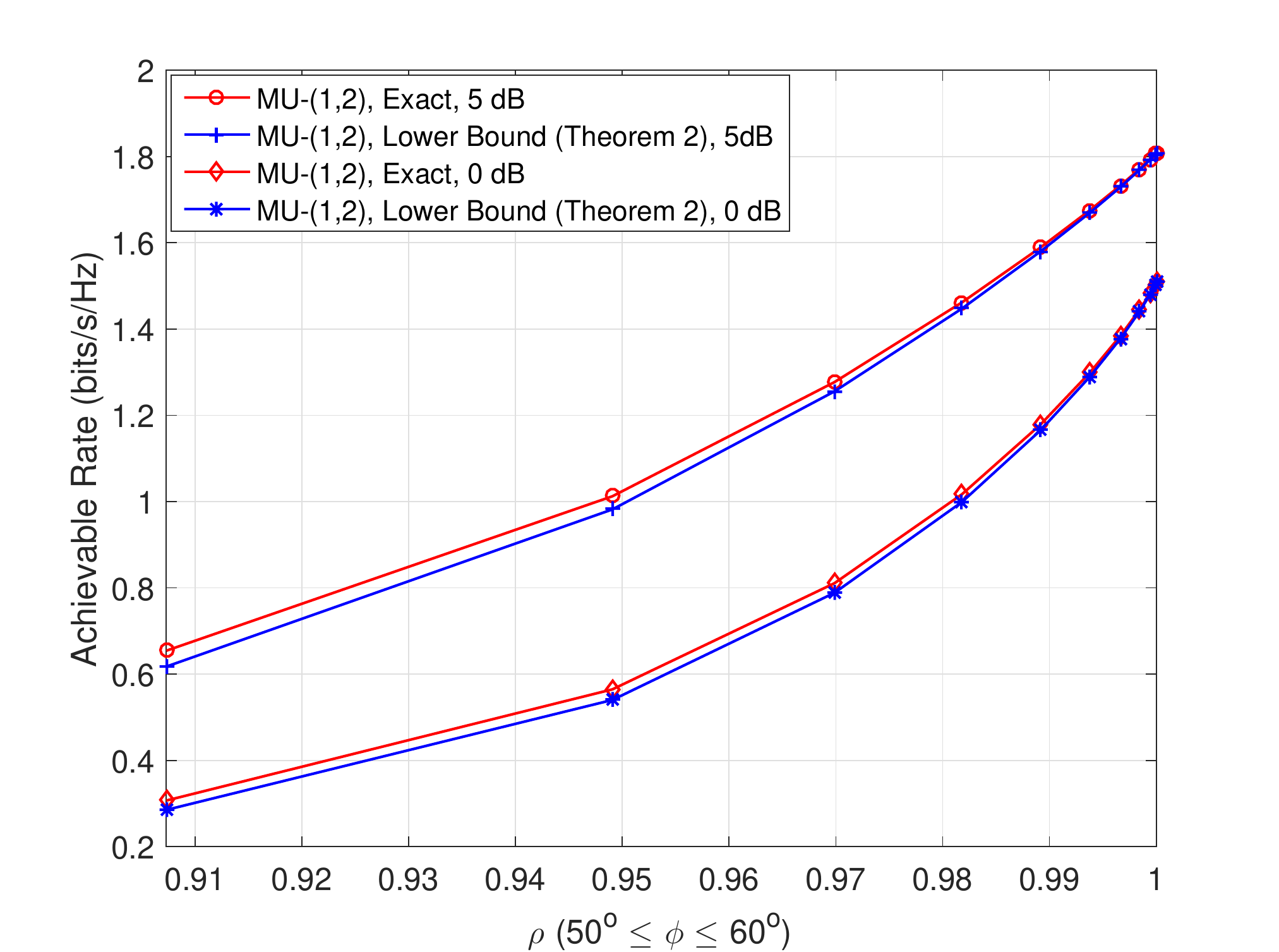}
     \caption{The achievable rate vs correlation factor $\rho$ performance for the imperfect correlation between MU-(1,1) and MU-(1,2) for SNR 5 and 0 dB.}
     \label{fig:corr}
 \end{figure}
 
\begin{figure}
     \centering
     \includegraphics[scale=0.4]{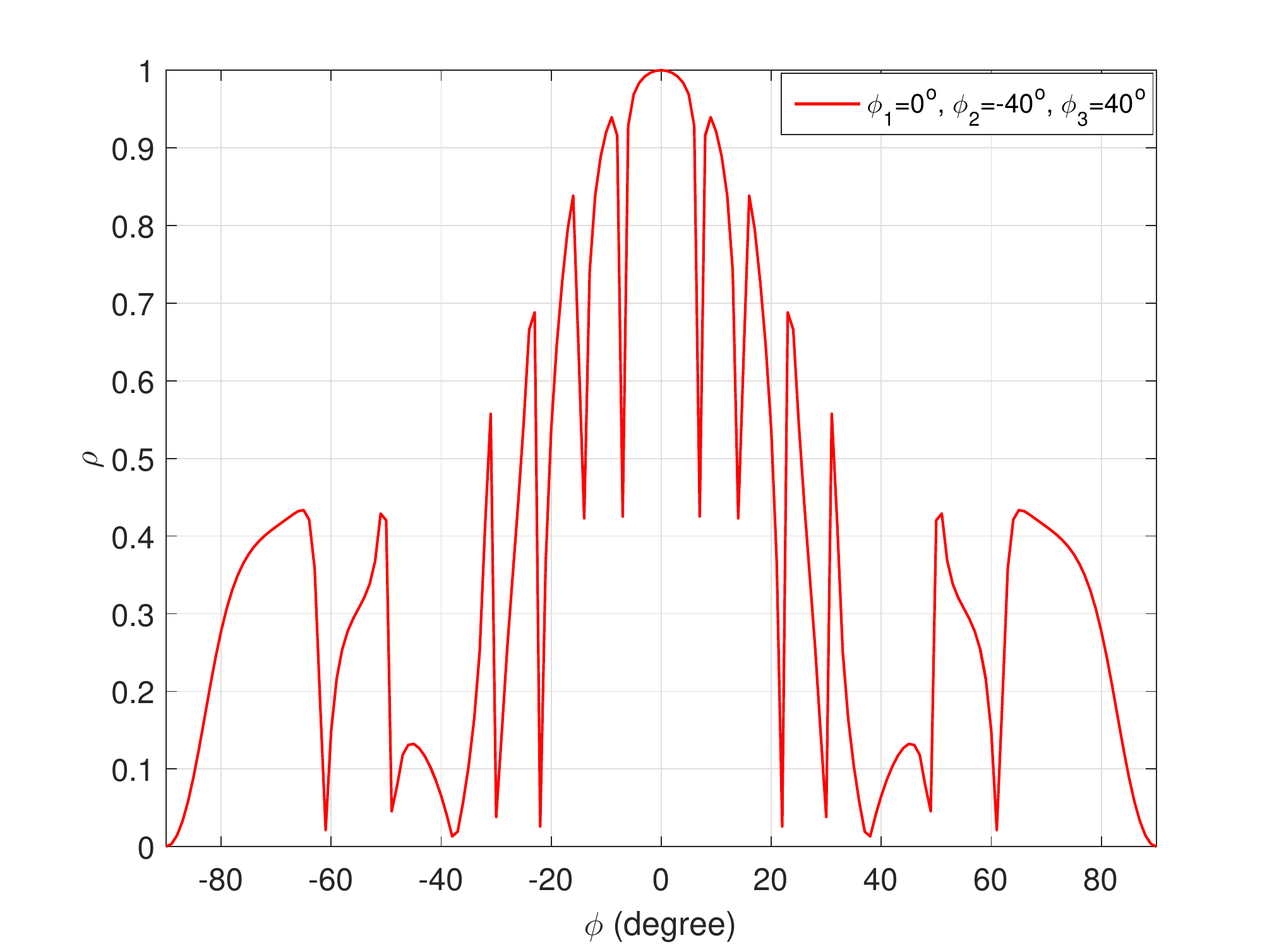}
     \caption{Relationship between the correlation and AoD for MU-$(1,2)$ when there are three clusters. The AoDs for MU-$(n,1)$ for $n=1, 2, \text{and}, 3$ are assumed to be respectively $0^\circ$, $-40^\circ$, and $40^\circ$.} 
     \label{fig:1RhoVsAngle}
 \end{figure}

Fig.~\ref{fig:corr} evaluates the effect of the imperfect correlation on the rate performance of MU-(1,2). We consider that the BS is equipped with a $16\times 1$ ULA which serves two clusters each containing two MUs. Each MU has a $4\times 1$ ULA. Also, there is a single path channel from the BS to each MU and full CSI is available at the BS. The elevation AoAs/AoDs have a uniform distribution in $\left[-\frac{\pi}{2},\frac{\pi}{2}\right]$. Further, the proposed algorithm in Section~\ref{algorithm} is used to maximize the rate where The allocated power for the close MU is assumed to be 1/4$P_\text{c}$ and for the far MU it is 3/4$P_\text{c}$ with $P_\text{c} = 1/2P_\text{t}$. Finally, the large-scale fading, i.e, $D^{-\nu}$ is assumed 0 dB for MU-(1,1) and MU-(2,1) and -10 dB for MU-(1,2) and MU-(2,2). It is assume that the SNR is 0 and 5 dB. Also, the AoD, i.e., $\phi_{1,2}$, is increasing from $50^\circ$ to $60^\circ$. As $\phi_{1,2}$ grows up, the correlation factor increases but it has a nonlinear behavior. This means $\rho$ is a nonlinear function of AoDs. Obviously, the small correlation considerably  degrades the rate performance, e.g., a channel correlation 0.92 decreases the rate about 1 bits/s/Hz compare to $\rho = 1$. By increasing the correlation factor the rate increases. That is, by increasing $\rho$ the effect of inter-cluster,  i.e, $\zeta_\text{inter}$, decreases and angle of the effective channel of the MU approaches to that of MU-(1,1) which leads to a higher value for $K_{T_\text{BS}}$ in Theorem~2. Also, the figure shows that the derived lower bound in Theorem~\ref{theo:2} is accurate and close to the simulation value.

In Fig.~\ref{fig:1RhoVsAngle} we investigate the relationship between the  correlation and AoD for $-90^\circ \leq \phi \leq 90^\circ$  in the case of three clusters and two MUs in each of them. First we set AoD of MU-$(n,1)$ for $n=1, 2, 3$ respectively $0^\circ$, $-40^\circ$, and $40^\circ$. As red dash line shows the correlation between MU-$(1,1)$ and MU-$(1,2)$ around $0^\circ$ is 1. It also shows that when AoD of MU-$(1,2)$ is in range $[-7^\circ,7^\circ]$ the correlation remains greater than $0.95$.  The lowest correlation happens around $-40^\circ$ and $40^\circ$ which are the AoDs of the second and third clusters. 

 \section{Conclusion}\label{sec:conclusion}
Hybrid beamforming-based NOMA has been studied. To this end, we formulated an optimization problem for the sum-rate of HB-NOMA system. Then, due to the complicate objective function and constrains an algorithm is proposed in three steps. In order to evaluate the sum-rate, we derived a lower bound for each MU for imperfect correlation between the effective channel of the first MU and other MUs. Our lower bound analysis demonstrates that under the assumption of imperfect correlation the lower bound indicates that an improper correlation can cause a remarkable degradation in the rate performance. The numerical results support our analytical findings.

\appendices




\ifCLASSOPTIONcaptionsoff
  \newpage
\fi



%
\bibliographystyle{IEEEtran}
\bibliography{IEEEabrv,references}

%







\end{document}